\newenvironment{hangref}
  {\begin{list}{}{\setlength{\itemsep}{4pt}
  \setlength{\parsep}{0pt}\setlength{\leftmargin}{+\parindent}
  \setlength{\itemindent}{-\parindent}}}{\end{list}}
\def\qed{\relax\ifmmode\hskip2em \fbox{ }\else\unskip\nobreak\hskip1em 
$\fbox{}$\fi}
\newsavebox{\theorembox}
\newsavebox{\lemmabox}
\newsavebox{\corollarybox}
\newsavebox{\propositionbox}
\newsavebox{\examplebox}
\newsavebox{\propertybox}
\savebox{\theorembox}{\bf Theorem}
\savebox{\lemmabox}{\bf Lemma}
\savebox{\corollarybox}{\bf Corollary}
\savebox{\propositionbox}{\bf Proposition}
\savebox{\examplebox}{\bf Example}
\savebox{\propertybox}{\bf Property}
\newtheorem{theorem}{\usebox{\theorembox}}
\newtheorem{lemma}[theorem]{\usebox{\lemmabox}}
\newtheorem{definition}{{\sc Definition}\rm }[section]
\newtheorem{definitions}[definition]{{\sc Definitions\rm }}
\newenvironment{proof}{{\noindent\bf Proof~}}{\(\qed\)\vspace*{\proofskip} }
\newlength{\proofskip}
\begin{document}

\begin{center}

{\LARGE 
A Fast Heuristic for Exact String Matching
}

% Authors and addresses:

\footnotesize

\mbox{\large Srikrishnan Divakaran}\\
DA-IICT, 
Gandhinagar, Gujarat, India 382007,
\mbox{
Srikrishnan\_divakaran@daiict.ac.in }\\[6pt]
\normalsize
\end{center}

% begin "double spacing" the text:
\baselineskip 20pt plus .3pt minus .1pt

% Here is the abstract:

\noindent 
\begin{abstract}
Given a pattern string $P$ of length $n$ consisting of $\delta$ distinct characters     and a query string $T$ of length $m$, where the characters of $P$ and $T$  are drawn from an alphabet $\Sigma$ of size $\Delta$,    the {\em exact string matching} problem consists of finding all occurrences of $P$ in $T$. For this problem, we present a     randomized heuristic    that in $O(n\delta)$  time   preprocesses $P$ to  identify $sparse(P)$, a  rarely occurring substring of $P$, and  then   use  it  to find all occurrences of $P$ in $T$ efficiently. This    heuristic  has  an expected   search time of $O(
\frac{m}{min(|sparse(P)|, \Delta)})$, where $|sparse(P)|$ is at least $\delta$.  We also show that for a pattern string $P$   whose characters are chosen  uniformly at random from an alphabet of size $\Delta$, $E[|sparse(P)|]$ is $\Omega(\Delta log (\frac{2\Delta}{2\Delta-\delta}))$.
\end{abstract}
\bigskip
% Here are the Keywords:
\noindent {\it Key words:}  
Keywords: Exact String Matching;  Combinatorial   Pattern     Matching; Computational   Biology; Bio-informatics; Analysis of 
Algorithms; Fast Heuristics.
\noindent\hrulefill
% The body of the paper starts here:
%******************************************************
\section{Introduction}
Given a pattern string $P$ of length $n$ consisting of $\delta$ distinct    characters  and a query string $T$ of length $m$,
where the characters of $P$ and $T$  are drawn from an alphabet $\Sigma$ of size $\Delta$,    the {\em exact string matching} problem consists of finding all occurrences of   $P$ in $T$. This is a  fundamental  problem  with wide range of applications 
in Computer Science (used    in   parsers,  word    processors, operating systems,  web  search engines, image processing and natural language processing),    Bioinformatics and   Computational Biology (Sequence   Alignment and Database Searches). The algorithms  for  exact string matching   can  be  broadly categorized into the following categories: (1) character      based 
comparison algorithms, (2) automata  based     algorithms, (3) algorithms     based on bit-parallelism and (4) constant-space 
algorithms. In   this  paper, our      focus is on   designing    efficient         character based comparison algorithms for 
exact string matching. For a comprehensive survey of all categories of exact string matching algorithms, we refer the readers 
to Baeza-Yates[17], Gusfield[25], Charras and Lecroq [28], Crochemore et al [29] and Faro and Lecroq [30]. \newline \newline
A Typical character based comparison algorithm  can    be     described within the following  general   framework as follows:
\begin{itemize} 
\item [(1)]
First, initialize the search window to be the first $n$ characters of the query string $T$ (i.e. align the $n$ characters  of 
the pattern string $P$ with the  first $n$ characters of $T$). 
\item [(2)] 
Repeat the following until the search      window is  no  longer contained within the query string $T$:
\begin{quote} 
inspect  the aligned pairs in some order until there is either a mis-match in an  aligned pair or there is   a complete match 
among all the    $n$ aligned pairs. Then       shift the search window to the right. The order in which the aligned pairs are 
inspected and the length by  which the search window is shifted differs from one algorithm to another.
\end{quote} 
\end{itemize} 
The mechanism    that     the above framework provides is usually referred to as the {\em sliding window mechanism} [30, 31]. 
The algorithms that  employ the sliding window mechanism can be further  classified  based on the order in which they inspect 
the  aligned pairs into  the following broad categories: (1) left to right scan; (2) right to left scan; (3) scan in specific 
order, and (4) scan in   random  order or scan order is not relevant. The algorithms that inspect the aligned pairs from left 
to right are the most natural  algorithms; the algorithms that inspect the aligned pairs from right to left generally perform 
well in practice; the algorithms  that inspect the aligned pairs  in a  specific  order  yield  the  best theoretical bounds. 
For    a     comprehensive     description   of the exact string matching algorithms and access to an excellent framework for 
development, testing and analysis  of exact string matching algorithms, we refer the readers to the {\em SMART tool}  (string 
matching research tool) of Faro and Lecroq [31]. \newline \newline 
For algorithms that inspect aligned pairs from left to right, Morris and Pratt [1] proposed the first known       linear time 
algorithm. This algorithm was  improved by Knuth, Morris and Pratt [4] and requires $O(n)$    preprocessing time and a  worst 
case search time of at most $2m-1$ comparisons. For small pattern strings and reasonable probabilistic assumptions  about the 
distribution of characters in the query string, hashing [2,9] provides an $O(n)$ preprocessing  time and $O(m)$   worst  case 
search time solution. For  pattern  strings   that  fit  within a  word   of    main memory, Shift-Or [16,21] requires $O(n + 
\Delta)$ preprocessing time and a  search time of $O(m)$ and  can     also be easily adapted  to solve     approximate string  
matching problems. 
For algorithms that inspects aligned pairs from right to left,     Boyer-Moore algorithm [3] is one of the classic algorithms 
that  requires $O(n\delta)$   preprocessing time and a worst case search time of $O(nm)$ but in practice is  very fast. There 
are    several variants that simplify the Boyer-Moore algorithm and mostly  avoid its quadratic behaviour. Among the variants 
of  Boyer-Moore, the    algorithms of   Apostolico and Giancarlo [7,24],  Crochemore et al [13, 23] (Turbo BM), and   Colussi (Reverse Colussi) [12, 22]  
have   $O(m)$ worst case search time and are efficient in minimizing the number of character comparisons, whereas the   Quick 
Search [10], Reverse Factor [19], Turbo Reverse Factor [24], Zhu and Takaoka [8], and Berry and Ravindran [27] algorithms are very efficient in practice.
For    Algorithms that inspects the aligned pairs in a specific order, Two Way algorithm [13], Colussi [12], Optimal Mismatch 
and Maximal Shift [10], Galil and Giancarlo [18],    Skip Search , KMP Skip Search and Alpha Skip Search [26] are some of the well known algorithms.
Two    way algorithm was the first known linear time optimal space algorithm. The     Colussi      algorithm      improves the 
Knuth-Morris-Pratt algorithm and requires at most $\frac{3n}{2}$ text character comparisons in the worst case. The   Galil and Giancarlo 
algorithm    improves the Colussi algorithm in one special case which enables it to perform at most $\frac{4n}{3}$ text character
comparisons in the worst case.  
For Algorithms that inspects the aligned pairs in any order,the Horspool [5], Quick Search [10], Tuned Boyer-Moore [14], Smith [15] and Raita [20]
algorithms are some of   the   well known algorithms. All these algorithms have   worst case search time that is quadratic but 
are known to perform well in practice. 
\newline \newline 
{\bf Our Results}: In this paper, we present  a          simple randomized heuristic $R$  that essentially preprocesses $P$ in $O(n\delta )$   time to  identify $sparse(P)$, a rarely occurring substring of $P$ characterized by two characters of $P$, and   use it to find  occurrences of $P$ in $T$ efficiently. Heuristic $R$ has a worst case search time  of $O(mn)$     and an expected search time  of $O(\frac{m}{min(|sparse(P)|, \Delta)})$.  The   worst           case search time of $O(mn)$ is realized due to pathological  pattern instances (i.e. periodic patterns consisting of few distinct characters) that are   rare      and highly 
unlikely to occur in practice. However, our algorithm has a superior provable sub-linear  expected search time when   compared to existing algorithms. In addition, we          prove     that for a pattern string $P$ whose characters are chosen uniformly
at random from an alphabet of  size $\Delta$, $E[|sparse(P)|]$ is 
$\Omega(\Delta log (\frac{2\Delta}{2\Delta-\delta}))$,
 where $\delta$ 
is the number of distinct characters in $P$.  
 \newline \newline 
The  rest    of this paper   is    organized as follows: In Section $2$, we present our randomized heuristic  $R$.  In Section 
$3$, we present the analysis of our heuristic. In Section $4$, we show    for a pattern string $P$ whose characters are chosen uniformly at random from an alphabet of size $\Delta$,   we lower bound $E[|sparse(P)|]$.   Finally, in Section $5$ we present   our conclusions and future work.
\section{A Randomized Heuristic for Exact String Matching}
In   this section, we present a randomized Heuristic $R$ that   given a pattern string $P$ and  a query string $T$, finds  all 
occurrences of $P$ within $T$. First, we introduce some definitions that are essential for defining our Heuristic. 
\begin{definitions}
Given a   pattern      string $P$, and an ordered   pair of characters $u, v \in \Sigma$ (not necessarily distinct), we define $sparse^{(u,v)}(P)$, the    $2$-sparse     pattern of $P$ with respect to $u$ and $v$, to be the rightmost     occurrence of a substring of     $P$ of longest length that starts with $u$, ends with $v$, but does not contain $u$ or $v$      within it. If 
such a substring does not exist then $sparse^{(u,v)}(P)$ is defined to be the empty string.  We define $sparse(P)$ to   be the 
right most occurring among the longest $2$-sparse patterns of $P$.
\end{definitions}
\begin{definitions}
Given $sparse(P)$, the   longest   $2$-sparse pattern of $P$, we define $startc(P)$ and $endc(P)$ to be  the respective first and last characters of $sparse(P)$, and $startpos(P)$   and    $endpos(P)$  be the respective   indices of the first and last characters of $sparse(P)$ in $P$. For $c \in \Sigma$, if $c \in sparse(P)$,      $shift^{c}(P)$ is  the distance between  the rightmost     occurrence     of $c$ in $sparse(P)$ and the  last character of $sparse(P)$.  If $c$ is not present in $P$ then $shift^{c}(P)$ is set to $n$, the length of $P$. If $c$ is present in $P$ but not in   $sparse(P)$ then $shift^{c}(P)$ is set 
to $|sparse(P)|+1$.
\end{definitions}
{\bf Example $1$}: Let   $P = "abcabdacabdbb"$. From definition, we   can see $sparse^{(a,a)}(P) = "abda"$, $sparse{(a,b)}(P)="ab"$ (substring starting at      location $9$),  $sparse^{(a,c)}(P) = "abc"$,   $sparse^{(a,d)}= "abd"$ (substring starting at location $9$), $sparse^{(b,a)}(P) = "bda"$,  $sparse^{(b,b)}(P) = "bdacab"$,   $sparse^{(b,c)}(P) = "bdac"$,      $sparse^{(b,d)}(P) = "bd"$ (substring  starting at location $10$), $sparse^{(c,a)}(P) = "ca"$ (substring starting at location $8$), $sparse^{(c,b)}(P) =
"cab"$ (substring starting at location $8$) and $sparse^{(c,c)}(P) = "cabdac"$. Therefore, $sparse(P)="bdacab"$, $startc(P) = "b"$, $endc(P) = "b"$, $startpos(P) = 5$, and $endpos(P) = 10$. $shift^{a}(P)=1, shift^{b}(P)=0, shift^{c}(P)=2$, and $shift^{d}(P)=4$. \newline \newline 
{\bf BASIC IDEA}:  First, we preprocess $P$ to identify $sparse(P)$, {\it a rarely occurring substring of $P$   characterized 
by two characters in $P$}, and compute statistics of the occurrence of characters of $P$ in $sparse(P)$ relative to $endc(P)$ (the last character of $sparse(P)$). Then, during         the search  phase, we first   set the search window to be the first $n$ characters of    $T$ (i.e. align the $n$ characters of $P$ to the  first $n$ characters of $T$). Then, until   the search 
window reaches the end of $T$, we    repeatedly check      whether there is a match  between the first and    last characters 
of $sparse(P)$  (i.e. $startc(P)$ and $endc(P)$)  and their respective aligned characters $c$ and $d$   in the search window. The following three situations are possible:
\begin{itemize}
\item[(i)] [Type-1] {\em A mismatch between $ endc(P)$ and its corresponding aligned character $c$}: In this case the search 
window  is shifted  by  $shift^{c}(P)$ (i.e. the distance in  $sparse(P)$ between the  right most occurrence  of     $c$ and $endc(P)$) and then continues. 
\item[(ii)] [Type-2] {\em A match between $endc(P)$ and its corresponding aligned character $c$ but a mismatch   between the 
$startc(P)$ and its corresponding aligned character $d$}: In this case the search window is shifted by at least $|sparse(P)|$ 
and then continues. 
\item[(iii)] [Type-3] {\em a match between $endc(P)$ and $startc(P)$ with their respective aligned characters $c$ and   $d$}: 
 In this case $R$ invokes routine $Random-Match$         to verify an exact match between the $n$ characters of $P$ and their
respective aligned characters in the search window. $Random-Match$ inspects the  $n$ aligned pairs in    random order   until
it encounters a    mismatch or finds an exact match (i.e. match in    all the $n$ aligned pairs). If  it finds an exact match
then it reports  the starting location of the exact match in $T$ and then  shifts the search window by at least $|sparse(P)|$
and  continues. Otherwise, it shifts the search window by at least $|sparse(P)|$ and continues.
\end{itemize}
{\bf Randomized Heuristic $R$}
\begin{tabbing}
Input(s): \= (1) Pattern string $P$ of length $n$; \\
          \> (2) Query string $T$ of length $m$;\\
Output(s): The starting positions of the occurrences of $P$ in $T$; \\ 
{\em Preprocessing}: \= (1) \= Scan $P$ from left to right and compute \\
                     \>     \>      \= [a] for each $u, v \in P$, $sparse^{(u,v)}(P)$; \\
                     \>     \>      \> [b] $sparse(P) = |sparse^{(a,b)}(P)|=max_{u,v \in P}^{} |sparse^{(u,v)}(P)|$. \\
                     \> (2) From $sparse(P)$, compute $startc(P), endc(P), startpos(P)$, and $endpos(P)$. \\
                     \> (3) For $c \in \Sigma$, compute $shift^{c}(P)$. \\                     
{\em Search}:\= \\
          \> [1] \= [a] Set $i=0$ and $j=n$ [Search Window set to $[1..n]$] \\
          \>     \> [b] Set $\hat{j}= endpos(P)$ and $\hat{i} = startpos(P)$; \= [Indices of last and first characters \\
          \>     \>                                                           \>  of $sparse(P)$ in $P$] \\
          \> [2] \= while $(j  <  m)$ [While the search window is contained in $T$] \\
          \>     \> [a] Let $c = T[i+\hat{j}]$; $d=T[i+\hat{i}]$; \= [Characters in search window aligned with \\
          \>     \>                                               \>  the last and first characters of $sparse(P)$]\\
          \>     \> [i] \= if  \= ($c \ne endc(P)$)  [Type-1 event] \\
          \>     \>     \>     \>  $i = i + shift^{c}(P)$; $j = j + shift^{c}(P)$; [Shift window by $shift^{c}(P)$] \\
          \>     \> [ii] \= else\=if ($d \ne startc(P)$) [Type-2 event] \\
          \>     \>     \>     \> if (\=$startc(P) == endc(P)$) \\
           \>    \>     \>     \>     \> $i=i+|sparse(P)|$; $j=j+|sparse(P)|$ [Shift window by $|sparse(P)|$] \\
           \>     \>     \>     \> else\= \\ 
           \>     \>     \>     \>     \> $i=i+|sparse(P)|+1$; $j=j+|sparse(P)|+1$ [Shift window by $|sparse(P)|+1$]\\
          \>     \> [iii]\= else\= [Type-3 event] \\
          \>     \>     \>     \> Call $Random-Match(P[1..n], T[i+1..i+n])$; [Look for $P$ in $T[i+1..j]$]\\
          \>     \>     \>     \> if (\=$startc(P) == endc(P)$) \\
          \>     \>     \>     \>     \> $i=i+|sparse(P)|$; $j=j+|sparse(P)|$ [Shift window by $|sparse(P)|$] \\
          \>     \>     \>     \> else\= \\ 
          \>     \>     \>     \>     \> $i=i+|sparse(P)|+1$; $j=j+|sparse(P)|+1$.[Shift window by $|sparse(P)|+1$] \\
\end{tabbing}
Subroutine {\bf Random-Match}
\begin{tabbing}
Input(s): \= Strings $P$ and $Q$ each of length $n$; \\
Output(s): If there is an exact match between the strings $P$ and $Q$ then return "TRUE" else "FALSE"; \\
begin \= \\
      \> Let $\pi = (\pi(1), \pi(2), ..., \pi(n))$ be a permutation drawn randomly from the set of $n!$ permutations \\
      \> of integers $1$ to $n$; \\
      \> for \= $(i=1; i \le n; i=i+1)$ \\
      \>     \> if \= $P[\pi(i)] \ne Q[\pi(i)]$ \\
      \>     \>    \> return FALSE; \\
      \> return TRUE; \\
end 
\end{tabbing} 
{\bf Remark}: We can view the Heuristic $R$ as using the first and last characters of $sparse(P)$ to decompose $T$    in  
$O(\frac{m}{min(|sparse(P)|, \Delta)})$
comparisons   into at most $\frac{m}{|sparse(P)|}$ substrings of $T$ of length $n$    that are 
potential candidates for an exact match with $P$. Then each of these $n$ candidate substrings are inspected for an exact 
match with $P$ using $Random-Match$ in $O(1)$ expected number of comparisons.
\section{Analysis of Algorithm $R$} 
In this section, we present the analysis of the randomized heuristic $R$. For rest of this paper we assume that the pattern string $P$ is an arbitrary string of length $n$ consisting of $\delta$ distinct characters that are drawn from an alphabet $\Sigma$ of size $\Delta$ and query string $T$ is a string of length $m$ whose characters are drawn from $\Sigma$ uniformly at random.  We now present the main results and their proofs.
\begin{theorem}
Randomized Heuristic $R$ finds all occurrences of $P$ in $T$.
\end{theorem}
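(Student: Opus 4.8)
The plan is to prove two things: \emph{soundness} (every position $R$ reports is a genuine occurrence of $P$) and \emph{completeness} (every genuine occurrence is reported). Soundness is immediate: a position is output only inside a Type-3 event, and only when the call to \emph{Random-Match}$(P[1..n],T[i+1..i+n])$ returns TRUE; since that routine returns TRUE exactly when $P[\pi(k)]=T[i+\pi(k)]$ for all $k$, i.e.\ when $P=T[i+1..i+n]$, no spurious position is ever produced, whatever the random permutation $\pi$ happens to be. The entire content of the theorem is therefore completeness.

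I would establish completeness through a loop invariant: at the top of each iteration of the search \texttt{while}-loop, with window offset $i$, every occurrence of $P$ at an offset smaller than $i$ has already been reported. To preserve the invariant across one iteration it suffices to prove a \emph{shift-safety lemma}: if the window advances from offset $i$ to offset $i+\sigma$, then no occurrence of $P$ begins at any offset in $\{i+1,\ldots,i+\sigma-1\}$, while the current offset $i$ is correctly resolved. The latter is automatic: offset $i$ is tested by \emph{Random-Match} in a Type-3 event, and in Types~1--2 it cannot be an occurrence at all, because $T[i+endpos(P)]=endc(P)$ and $T[i+startpos(P)]=startc(P)$ are necessary conditions for a match and one of them already fails. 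So the only real obligation is to show the skipped offsets are occurrence-free.

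I would prove the shift-safety lemma case by case, using the two defining features of $sparse(P)$. \textbf{Type-1} ($T[i+endpos(P)]=c\neq endc(P)$) is a Boyer--Moore bad-character argument: for each skipped offset the matched cell $T[i+endpos(P)]$ aligns with a pattern position lying strictly to the right of the rightmost occurrence of $c$ in $sparse(P)$, and by the definition of $shift^{c}(P)$ no such position can carry the value $c$; hence no occurrence. \textbf{Types~2 and 3} (here $T[i+endpos(P)]=endc(P)$) rest on the fact that $sparse(P)$ contains $endc(P)$ \emph{only as its last character}: for every offset $i''$ with $i<i''\le i+|sparse(P)|-1$ the already-matched cell $T[i+endpos(P)]$ would have to align with an interior or earlier cell of the aligned copy of $sparse(P)$, none of which can equal $endc(P)$, so again no occurrence. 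This disposes of the first $|sparse(P)|-1$ skipped offsets in every case.

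The main obstacle is justifying the extra unit in the shift of length $|sparse(P)|+1$ taken when $startc(P)\neq endc(P)$, i.e.\ ruling out an occurrence at offset $i+|sparse(P)|$, the single offset at which the matched cell $T[i+endpos(P)]$ falls \emph{outside} the aligned copy of $sparse(P)$ and so carries no information by the preceding argument. This is exactly where \emph{maximality} of $sparse(P)$ must be invoked: at offset $i+|sparse(P)|$ the cell $T[i+endpos(P)]=endc(P)$ aligns with $P[startpos(P)-1]$, the character immediately preceding $sparse(P)$ in $P$; were it equal to $endc(P)$, then $P[startpos(P)-1\,..\,endpos(P)]$ would be a strictly longer $2$-sparse pattern, since it both begins and ends with $endc(P)$ and its interior is free of $endc(P)$ (the interior of $sparse(P)$ is, and $startc(P)\neq endc(P)$), contradicting that $sparse(P)$ is a longest such substring. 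Hence the preceding character differs from $endc(P)$ and offset $i+|sparse(P)|$ is occurrence-free, which is precisely what licenses the longer shift; the symmetric use of maximality, that the character following $sparse(P)$ cannot equal $startc(P)$, supports the analogous reasoning on the start side. I would close by checking the boundary situations in which this neighbouring character does not exist ($startpos(P)=1$ or $endpos(P)=n$), and by observing that every shift is strictly positive, so the search terminates and the invariant, carried through to the last window, delivers all occurrences of $P$ in $T$.
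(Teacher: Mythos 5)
Your architecture (soundness of \emph{Random-Match}, plus a shift-safety lemma proved by cases on the three event types using the fact that $startc(P)$ and $endc(P)$ do not occur in the interior of $sparse(P)$) is essentially the paper's own argument, and in one respect you are more careful than the paper: you correctly notice that the extra unit in the shift $|sparse(P)|+1$ is not justified by the ``no $endc(P)$ inside $sparse(P)$'' property alone and must come from \emph{maximality} of $sparse(P)$, a point the paper's proof silently skips. However, your Type-2/3 case has a genuine gap at the single offset $i''=i+|sparse(P)|-1$: there the matched cell $T[i+endpos(P)]=endc(P)$ aligns with the \emph{first} cell of the hypothetical copy of $sparse(P)$, which carries $startc(P)$, and your claim that none of these cells ``can equal $endc(P)$'' fails exactly when $startc(P)=endc(P)$ --- which is precisely the branch in which the algorithm shifts by the full $|sparse(P)|$ and therefore skips that offset. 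The gap cannot be patched, because the statement is false for the algorithm as written: take $P="aba"$, so that $sparse(P)="aba"$, $startc(P)=endc(P)="a"$, $|sparse(P)|=3$, and $T="ababa"$. The first window is a Type-3 event, the occurrence at position $1$ is reported, the window shifts by $3$, and the genuine occurrence at position $3$ is never examined. (The paper's proof asserts the same unguarded claim that ``$endc(P)$ does not occur inside $sparse(P)$,'' ignoring its occurrence at the start of $sparse(P)$ when $startc(P)=endc(P)$, so you have reproduced rather than introduced the error; a correct proof must either restrict to $startc(P)\neq endc(P)$ or change the shift to $|sparse(P)|-1$ in that branch.)

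Your Type-1 case is also only established for $c\in sparse(P)$. When $c\notin P$ the shift is $n$, and for every skipped offset $i''\geq i+endpos(P)$ the matched cell lies to the \emph{left} of the putative occurrence and constrains nothing; if $endpos(P)<n$ this is a real hole (e.g.\ $P="abcdd"$ has $sparse(P)="abcd"$ and $endpos(P)=4<5$, so a mismatching character not in $P$ at $T[i+4]$ causes a shift of $5$ that can jump over an occurrence of $P$ beginning at $T[i+5]$). When $c\in P\setminus sparse(P)$ the shift is $|sparse(P)|+1$ and the last skipped offset aligns the cell with $P[startpos(P)-1]$, which is outside $sparse(P)$ and hence not controlled by your ``rightmost occurrence of $c$ in $sparse(P)$'' argument; this sub-case can be closed by the same maximality argument you use for Types 2--3 when $startc(P)\neq endc(P)$, but it needs to be said. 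As written, the shift-safety lemma is proved only for the sub-case $c\in sparse(P)$, and the remaining sub-cases are where the real difficulties (and, for $c\notin P$, an actual failure of the theorem) live.
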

\begin{proof}
The Algorithm $R$ during its search phase looks for $P$ by first looking for a match between   the last and first   characters of      $sparse(P)$ and the characters of $T$ in the search window  at offsets $endpos(P)$ and $startpos(P)$ respectively. Let 
$c$ and $d$           be       the respective characters in the search window at offsets  $endpos(P)$ and $startpos(P)$.   The 
following      three  scenarios (events) are possible. (i){\em Type-1} event  happens if there  is a mismatch between $c$  and
$endc(P)$  (ii) {\em Type-2} event happens if there is  a match between $c$  and $endc(P)$  and a mismatch  between  $d$   and  $startc(P)$, and (iii) {\em Type-3} event happens when there is a match between $c$    and   $endc(P)$ and a match between $d$ and $startc(P)$. \newline \newline 
In the case of Type-1 event, the  search  window is shifted to the   right by $shift^{c}(P)$. Recall that if $c$ is present in 
$sparse(P)$ then  $shift^{c}(P)$ is the distance in $sparse(P)$ between the right most occurrence of $c$  and $endc(P)$.    If 
$c$ is present in $P$ but  not in   $sparse(P)$ then $shift^{c}(P)$ is set to $|sparse(P)|+1$, otherwise $shift^{c}(P)$ is set 
to $n$, the length of $P$.     Now, by  shifting      the search window by $shift^{c}(P)$, we  will show that no occurrence of $sparse(P)$ will be skipped  and hence  no   occurrence of   $P$ will be skipped. There are two situations possible  depending 
on whether or not $endc(P)$ occurs in $T$  within the shifted interval. If $endc(P)$ did  not occur within     the     shifted interval of $T$ then we can see that no occurrence of $sparse(P)$ can end within the shifted interval. Hence  we will not skip $sparse(P)$ and hence not skip $P$. Now, we  consider    the situation when $endc(P)$ occurs within the shifted interval. This 
implies that $c$ occurs in the  search window less than $shift^{c}(P)$ positions to the right end of the search window.   From  definition of $shift^{c}(P)$, we know that the   right most     occurrence       of $c$ in $sparse(P)$ will be  $shift^{c}(P)$ positions to  the left of $endc(P)$.  This implies    that no occurrence of $sparse(P)$ can end  within the shifted portion of 
the search window. Hence we are done. \newline \newline 
In the case of $Type-2$ event, we know that $c == endc(P)$ but $d \ne startc(P)$ and the search window is shifted to the right by $|sparse(P)|$ ($|sparse(P)|+1$) if $startc(P) == endc(P)$ ($startc(P) \ne endc(P)$). Notice in this case, from   definition 
of $sparse(P)$ we know that the character $endc(P)$ does not occur inside $sparse(P)$, hence $sparse(P)$ cannot end within the 
shifted portion of     the search window. Hence we are done. \newline \newline 
In    the case of $Type-3$ event, we  can observe that $c ==endc(P)$ and $d = startc(P)$ and after invoking the $Random-Match$ subroutine we   shift    the window to the right      by either $|sparse(P)|$ ($|sparse(P)| + 1$) if    $startc(P) == endc(P)$
($startc(P) \ne endc(P)$). Notice  in this case, from    definition of $sparse(P)$ we know that the characters $endc(P)$   and 
$startc(P)$ do    not occur  inside $sparse(P)$, hence $sparse(P)$ cannot end within the shifted portion of the search window. Hence we are done.
\end{proof} 
\begin{theorem}
Given   any pattern string $P$ of length $n$ with $\delta$ distinct characters and a query string $T$ of length $m$, Heuristic $R$ finds all occurrences of $P$ in $T$ in $O(\frac{m}{min(|sparse(P)|, \Delta)})$
expected time, where  $|sparse(P)|$ is at least $\delta$.
\end{theorem}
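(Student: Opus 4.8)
The plan is to take correctness for granted from Theorem~1 and to bound only the expected number of character comparisons, splitting it into two disjoint contributions: the \emph{scanning} cost of the Type-1 and Type-2 shifts that locate candidate windows, and the \emph{verification} cost of the \emph{Random-Match} calls inside Type-3 events. Writing $s=|sparse(P)|$, my target is to show the scanning cost is $O(m/\min(s,\Delta))$ and the verification cost is $O(m/s)$; since $m/s\le m/\min(s,\Delta)$, their sum is the claimed bound. The combinatorial assertion $s\ge\delta$ I will prove separately, as it is independent of the timing analysis.

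For the scanning cost I first note that each pass of the main loop performs at most two comparisons, so it suffices to bound the expected number of iterations $K$. The structural fact I would establish is that the positions $T[i+\hat\jmath]$ inspected against $endc(P)$ in successive iterations are \emph{strictly increasing}: the window only moves right, $\hat\jmath$ is a fixed offset, and every shift is at least $1$; moreover each such position strictly exceeds \emph{every} position examined in any earlier iteration, since the earlier $startc(P)$-aligned positions lie even further left. Because $T$ is drawn uniformly and independently over $\Sigma$, the characters $c_1,c_2,\dots$ aligned with $endc(P)$ are therefore i.i.d.\ uniform on $\Sigma$, even though the position of $c_{k+1}$ depends on $c_1,\dots,c_k$ through the shifts. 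I then lower-bound the shift at iteration $k$ by $g(c_k)$, where $g(c)=shift^{c}(P)$ for $c\ne endc(P)$ and $g(endc(P))=s$ (a match at $endc(P)$ forces a shift of at least $s$). Since $\sum_{k\le K}g(c_k)\le\sum_{k\le K}(\text{shift}_k)< m$ and $K\le m$ deterministically, Wald's identity gives $E[K]\le m/E[g(c_1)]$, with $E[g(c_1)]=\frac1\Delta\sum_{c\in\Sigma}g(c)$.

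It then remains to show $\frac1\Delta\sum_{c}g(c)=\Omega(\min(s,\Delta))$. Grouping the alphabet, the $\Delta-\delta$ characters absent from $P$ each contribute $n\ge s$, the $\delta-d_s$ characters of $P$ outside $sparse(P)$ each contribute $s+1$, and the $d_s$ distinct characters inside $sparse(P)$ (where $d_s$ denotes their number) contribute \emph{distinct} shift values, for a total of at least $\frac{d_s(d_s-1)}{2}$; altogether $\sum_c g(c)\ge (\Delta-d_s)s+\frac{d_s(d_s-1)}{2}$. A clean dichotomy on $d_s$ finishes it: if $d_s\ge\Delta/2$ the quadratic term alone is $\Omega(\Delta)$, and if $d_s<\Delta/2$ then the $\Delta-d_s>\Delta/2$ characters outside $sparse(P)$ each contribute at least $s$, giving $\Omega(s)$; in either regime this is $\Omega(\min(s,\Delta))$, so $E[K]=O(m/\min(s,\Delta))$ iterations of $O(1)$ cost each. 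For the verification cost, every Type-3 event advances the window by at least $s$, so at most $m/s$ of them occur. In a single call, conditioning on the two already-matched positions leaves the remaining window positions uniform, so the count of mismatches stochastically dominates a Binomial$(n-2,\,1-1/\Delta)$ variable; a random inspection order meets the first mismatch after an expected $(n+1)/(|M|+1)$ probes, and averaging $E[1/(|M|+1)]$ against the binomial yields $O(1)$ expected comparisons per call, hence $O(m/s)$ total.

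Finally, for $s\ge\delta$: let $p$ be the least index with $P[1..p]$ containing all $\delta$ distinct characters, so $p\ge\delta$ and the completing character $v=P[p]$ occurs exactly once in $P[1..p]$. Among the $\delta$ distinct ``last occurrence within $[1,p]$'' positions, let $a$ be the smallest; being $\delta$ distinct indices in $\{1,\dots,p\}$, they force $a\le p-\delta+1$. Then $u=P[a]$ and $v=P[p]$ each occur exactly once in $P[a..p]$, so $sparse^{(u,v)}(P)$ has length at least $p-a+1\ge\delta$, whence $s\ge\delta$. I expect the main obstacle to be the second paragraph: rigorously justifying that the inspected $endc(P)$-characters may be treated as i.i.d.\ uniform despite the data-dependent window motion, which is exactly what legitimises dividing the total travel $m$ by the per-step expected shift; the expected-shift estimate and the \emph{Random-Match} conditioning are comparatively routine.
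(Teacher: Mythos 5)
Your proposal follows the same top-level strategy as the paper's proof of this theorem: charge $O(1)$ expected comparisons to each iteration (one for Type-1, two for Type-2, and $O(1)$ for the \emph{Random-Match} call in Type-3, the paper's Lemma~4), show that the expected shift per iteration is $\Omega(\min(|sparse(P)|,\Delta))$ (the paper's Lemma~3), divide the total travel $m$ by the per-iteration expected shift, and invoke $|sparse(P)|\ge\delta$ (the paper's Lemma~5, which your $p$/$a$ construction reproves correctly). Where you differ is in rigor, mostly to your advantage. The paper's Lemma~3 asserts that for a Type-1 mismatch the shift is ``equally likely to be any value in $[1..\delta]$'' when $c\in sparse(P)$, conflating $\delta$ with the number of distinct characters actually present in $sparse(P)$, and the paper's Theorem~2 simply divides $m$ by the expected shift without justifying that step. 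Your accounting via $d_s$, the dichotomy $d_s\ge\Delta/2$ versus $d_s<\Delta/2$, and the explicit appeal to Wald's identity (with $K$ a stopping time and the shift an exact deterministic function $g(c_k)$ of the probed character) repair both. Your \emph{Random-Match} bound via $E[(n+1)/(|M|+1)]$ with $|M|$ dominating a binomial is likewise cleaner than the paper's Lemma~4, which only writes down a geometric-type series.

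The one step your argument does not close is the independence claim you yourself flag as the main obstacle. You justify that each $endc(P)$-probe lands on a fresh position of $T$ by noting that earlier $startc(P)$-probes lie further left, but you do not account for the \emph{Random-Match} probes. After a Type-3 event at offset $i$, \emph{Random-Match} may examine positions of $T$ up to $i+n$, while the next $endc(P)$-probe sits at $i+|sparse(P)|+endpos(P)$, which is at most $i+n$ whenever $|sparse(P)|+endpos(P)\le n$ (e.g., a short $sparse(P)$ near the left end of $P$). In that case the next probed character may already have been examined and, if it matched, is conditioned to equal a specific character of $P$; so $c_{k+1}$ is not uniform given the history and Wald's identity does not apply as stated. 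The paper has exactly the same hole, only hidden, since it never discusses dependence across iterations. A plausible patch: \emph{Random-Match} examines only $O(1)$ positions in expectation before halting, so each Type-3 event contaminates $O(1)$ future probes in expectation; there are at most $m/|sparse(P)|$ Type-3 events, so the contaminated iterations contribute only $O(m/|sparse(P)|)$ extra $O(1)$-cost steps, and Wald can be applied to the remaining fresh probes. You should state and prove something of this form before the second paragraph of your argument can stand.
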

\begin{proof} 
We bound the expected number of comparisons performed by Heuristic $R$ during its search phase  by looking at the expected number of comparisons performed during each of the three type of events it encounters in comparison to the expected length by which the   search window is shifted. For Type-1 events the number of character comparisons involving the query string $T$ is $1$ and for Type-2 events it is $2$. For Type-3 events, after two character comparisons we are invoking the Random-Match subroutine. From Lemma $4$, we know the expected number of matches before a mismatch while invoking $Random-Match$ is $O(1)$. Therefore the number of character comparisons for    Type-3 event is $O(1)$. Now since the number of character comparisons involving $T$ is $O(1)$ for all three events, the total number of comparisons during the search phase of Heuristic $R$ is bound by the total number of events.  From   Lemma $3$, we know that the expected  length by which the search window is shifted after encountering a Type-1,  Type-2 or Type-3 event is at least $min(|sparse(P)|,\Delta)$. Since the total length by which the search window can be shifted is $m$, the total number of events is therefore $O(\frac{m}{min(|sparse(P)|, \Delta)})$. From Lemma $5$ we know that $|sparse(P)|$ is at least $\delta$, and from definition we know $\Delta \geq \delta$, therefore $ min(|sparse(P)|,\Delta) \geq \delta$. Hence the result.
\end{proof} 
\begin{lemma}
For   any    pattern   string $P$, during the search phase of Heuristic $R$, the expected length of shift of the search window after a Type-1, Type-2 or Type-3 event is at least $min(|sparse(P)|,\Delta)$.
\end{lemma}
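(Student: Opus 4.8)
The plan is to condition on which of the three event types occurs and to dispose of the two deterministic cases first, thereby reducing the lemma to a single expectation computation for Type-1 events. For a Type-2 or Type-3 event the shift is not random: by construction the window advances by $|sparse(P)|$ when $startc(P)=endc(P)$ and by $|sparse(P)|+1$ otherwise, so in either case the shift is at least $|sparse(P)|\ge \min(|sparse(P)|,\Delta)$. Thus it remains only to show that the \emph{expected} Type-1 shift meets the same bound.

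For Type-1 events, recall that the aligned text character $c=T[i+endpos(P)]$ is, since the symbols of $T$ are drawn uniformly at random from $\Sigma$, a uniform random symbol of $\Sigma$, and that conditioning on a Type-1 event is exactly conditioning on $c\ne endc(P)$; hence $c$ is uniform over the $\Delta-1$ symbols other than $endc(P)$, and the shift equals $shift^{c}(P)$. First I would partition these $\Delta-1$ symbols into three classes according to the definition of $shift^{c}(P)$: the $\Delta-\delta$ symbols absent from $P$, each contributing a shift of $n\ge |sparse(P)|$; the $\delta-k$ symbols occurring in $P$ but not in $sparse(P)$ (where $k$ denotes the number of distinct symbols of $sparse(P)$), each contributing $|sparse(P)|+1$; and the $k-1$ symbols occurring inside $sparse(P)$ other than $endc(P)$, whose shifts are the distances from their rightmost occurrences to the last position of $sparse(P)$. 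The structural facts I would record here are that these last $k-1$ shifts are \emph{distinct} positive integers lying in $\{1,\dots,|sparse(P)|-1\}$ (distinct rightmost positions give distinct distances), and in particular that $startc(P)$ contributes the large value $|sparse(P)|-1$ when $startc(P)\ne endc(P)$.

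Writing the conditional expectation as the average of these $\Delta-1$ shift values, I would then lower bound it by grouping the contributions: the $\Delta-k$ symbols lying outside $sparse(P)$ each contribute at least $|sparse(P)|$, while the $k-1$ interior symbols contribute, by distinctness, at least $1+2+\cdots+(k-1)=\frac{k(k-1)}{2}$. This yields
\[
E[\,\mbox{shift}\mid\mbox{Type-1}\,]\;\ge\;\frac{(\Delta-k)\,|sparse(P)|+\frac{k(k-1)}{2}}{\Delta-1}.
\]
To finish I would split into the two regimes defining the minimum. When $\Delta\le|sparse(P)|$ I would lean on the $\Delta-\delta$ symbols absent from $P$, each of which carries the large shift $n\ge|sparse(P)|\ge\Delta$; when $\Delta\ge|sparse(P)|$ the target is $|sparse(P)|$, and I would invoke Lemma~5, which guarantees $|sparse(P)|\ge\delta\ge k$, to bound the number of small interior shifts against the $\Delta-k$ large ones. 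I expect this balancing to be the main obstacle, and the genuinely delicate case to be the one in which few symbols are absent from $P$ (so $\delta$ is close to $\Delta$) while many distinct symbols crowd the interior of $sparse(P)$ with small shifts: there the abundant large shifts are scarcest, and closing the inequality to exactly $\min(|sparse(P)|,\Delta)$ rests critically on the distinctness of the interior rightmost positions and on the extra $+1$ carried by the $\delta-k$ symbols in $P\setminus sparse(P)$.
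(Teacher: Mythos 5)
Your overall strategy is the same as the paper's: dispose of Type-2 and Type-3 events deterministically (shift at least $|sparse(P)|$), then compute the expected Type-1 shift by averaging $shift^{c}(P)$ over the uniformly random mismatch character $c$. Your bookkeeping is in fact sharper than the paper's: the paper simply asserts that for $c\in sparse(P)$ the shift is ``equally likely to be any value in $[1..\delta]$'' and writes the lower bound $[(1+2+\cdots+\delta)+(\Delta-\delta)|sparse(P)|]/\Delta$, whereas you correctly ground the sum $1+2+\cdots+(k-1)$ in the distinctness of the rightmost occurrences and separate the three classes of symbols (absent from $P$, in $P$ but not in $sparse(P)$, inside $sparse(P)$) with the right cardinalities and the right conditioning on $c\ne endc(P)$.

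The step you flag as delicate is, however, a genuine gap --- and it cannot be closed for the statement as literally written. Take $sparse(P)$ to consist of $\Delta$ pairwise distinct characters (so $k=\delta=\Delta$ and $|sparse(P)|=\Delta$, which is attainable, e.g.\ $P$ a permutation of $\Sigma$). Then the $\Delta-1$ possible mismatch characters have shifts exactly $1,2,\dots,\Delta-1$, so the conditional expected Type-1 shift is $\frac{1+2+\cdots+(\Delta-1)}{\Delta-1}=\frac{\Delta}{2}$, while $\min(|sparse(P)|,\Delta)=\Delta$. So no amount of balancing in your final regime split will recover the constant $1$; your bound $\bigl((\Delta-k)|sparse(P)|+\tfrac{k(k-1)}{2}\bigr)/(\Delta-1)$ is essentially tight and is off by a factor of $2$ in exactly the regime you identified ($\delta$ close to $\Delta$, many distinct symbols inside $sparse(P)$). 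The paper's own proof has the same limitation and quietly retreats to asserting the expected shift is ``at least $O(\min(|sparse(P)|,\Delta))$'' (meaning $\Omega$); the lemma is only true, and only needed for Theorem~2, up to a constant factor. You should therefore either prove the lemma with an explicit constant (e.g.\ expected shift at least $\tfrac{1}{2}\min(|sparse(P)|,\Delta)$, which your inequality does deliver after the case split on whether $k\le\Delta/2$), or restate the conclusion as $\Omega(\min(|sparse(P)|,\Delta))$; as written, the final inequality you would need is false.
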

\begin{proof}
Notice that the query string $T$ is of length $m$ and each of its characters are drawn uniformly    at random from the alphabet $\Sigma$ of size $\Delta$. In the case of a Type-1 event, we can observe that the search window  is shifted by $shift^{c}(P)$.    Notice       that the mismatch character $c$ is equally likely to be any character in $\Sigma$   other  than $endc(P)$. So, we can observe that if $c \in sparse(P)$, then $shift^{c}(P)$ is equally likely to be any value in the  interval $[1..\delta]$ and if $c \notin sparse(P)$ then $shift^{c}(P)$ is at least $|sparse(P)|$. Therefore, the expected  shift  length will be at least $[(1 + 2 +...+ \delta) + (\Delta-\delta)|sparse(P)|)]/\Delta$. If $\delta < \Delta/2$, then this above  sum is $O(|sparse(P)|)$, otherwise it is $O(|\Delta|)$. Therefore, the expected shift length is at least $O(min(|sparse(P)|,\Delta))$. In the case of Type-2 event, we can observe that the search window is shifted by at least $|sparse(P)|$. Similarly    after   a type-3 event, Random-Match subroutine is invoked and the  search window is shifted by at least $|sparse(P)|$.  Hence     in all 
three types of events the expected search window shift is at least $O(min(|sparse(P)|,\Delta))$. Hence the result.
\end{proof}
\begin{lemma}
For any given pattern string $P$ of length $n$ from $\Sigma$ and a $n$ length substring of query string $T$ whose    characters 
are drawn independently and uniformly from $\Sigma$, the expected number of matches before a mismatch when invoking Random-Match subroutine is $O(1)$.
\end{lemma}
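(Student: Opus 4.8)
The plan is to exploit the independence and uniformity of the characters of $T$, which together reduce the analysis to a single geometric series. Since each character of the length-$n$ window $Q$ is drawn independently and uniformly from $\Sigma$, for every position $j$ we have $\Pr[P[j]=Q[j]]=1/\Delta$, and these events are mutually independent across $j$ (the pattern $P$ being fixed does not affect this, since it is $Q$ that carries the randomness). The permutation $\pi$ chosen by Random-Match is independent of $Q$, so the sequence of match/mismatch outcomes inspected in the order $\pi(1),\pi(2),\dots$ behaves as a sequence of independent Bernoulli trials, each a match with probability $1/\Delta$.

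Let $X$ denote the number of matches before the first mismatch. First I would invoke the tail-sum identity $E[X]=\sum_{t\ge 1}\Pr[X\ge t]$. The event $X\ge t$ is exactly the event that the first $t$ inspected positions all match. For any fixed set $A$ of $t$ positions, independence gives probability $(1/\Delta)^{t}$ that all positions in $A$ match; since this value is the same for every $t$-subset, it is unaffected by which positions $\pi$ happens to select first, so $\Pr[X\ge t]=\Delta^{-t}$. Summing the resulting geometric series then yields $E[X]=\sum_{t=1}^{n}\Delta^{-t}\le \frac{1}{\Delta-1}$, which is $O(1)$ for any alphabet with $\Delta\ge 2$.

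The only real subtlety, and the step I would be most careful about, is justifying that the random inspection order does not distort the calculation. The clean way around this is the observation that the probability $\Delta^{-t}$ holds uniformly over all $t$-subsets of positions, so averaging over the positions selected by $\pi$ leaves it unchanged; equivalently, since $\pi$ is independent of $Q$, conditioning on $\pi$ turns the inspected comparisons into i.i.d.\ Bernoulli trials with success probability $1/\Delta$ and makes $X$ a geometric variable with mean $\frac{1/\Delta}{1-1/\Delta}=\frac{1}{\Delta-1}$. Either route gives the same $O(1)$ bound, and I would present whichever is shorter; the tail-sum version avoids any appeal to a named geometric-distribution mean and is fully self-contained.
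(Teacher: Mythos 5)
Your proof is correct and follows essentially the same route as the paper's: both reduce the inspected comparisons to i.i.d.\ match events of probability $1/\Delta$ and bound the resulting geometric series by a constant, the only difference being that you use the tail-sum identity $E[X]=\sum_{t\ge 1}\Pr[X\ge t]$ whereas the paper sums $k\cdot\Pr[X=k]$ directly. Your explicit justification that the random inspection order $\pi$ does not affect the calculation is a point the paper's one-line proof passes over in silence, so if anything your write-up is the more careful of the two.
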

\begin{proof}
Each character in $T$ is drawn independently and uniformly from $\Sigma$. Therefore,   the  expected number of matches  before a 
mismatch = $\frac{\Delta-1}{\Delta}(1 + \frac{2}{\Delta} + \frac{3}{\Delta^2} + . . .  + \frac{n-1}{\Delta^n}) = O(1)$.
\end{proof}
\begin{lemma}
For any    pattern string $P$, the length of $sparse(P)$, the longest $2$-sparse  pattern of $P$, is at least $\delta$,   where $\delta$ is the number of distinct characters in $P$.
\end{lemma}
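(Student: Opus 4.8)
The plan is to exhibit a \emph{single} ordered pair $(u,v)$ for which $|sparse^{(u,v)}(P)|$ is already at least $\delta$; since $|sparse(P)| = \max_{u,v}|sparse^{(u,v)}(P)|$, this will immediately give the bound. (The degenerate case $\delta=1$ is immediate, as any single matching character already forms a legal $2$-sparse substring, so I will assume $\delta \ge 2$.)

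First I would index the distinct characters of $P$ by first appearance. Let $c_1,\ldots,c_\delta$ be the distinct characters in order of first occurrence, occurring first at positions $p_1 < \cdots < p_\delta$. Since each $p_k$ introduces a character unseen in $P[1..p_k-1]$, a one-line induction gives $p_k \ge k$, and in particular $p_\delta \ge \delta$. I would then set $v := c_\delta = P[p_\delta]$, and record the key feature that $v$ occurs \emph{nowhere} in $P[1..p_\delta-1]$, so the only copy of $v$ in the prefix $P[1..p_\delta]$ sits at its right end.

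Next I would fix the left endpoint. For each remaining character $c \in \{c_1,\ldots,c_{\delta-1}\}$, let $\ell(c)$ be its last occurrence inside $P[1..p_\delta-1]$; these are $\delta-1$ \emph{distinct} positions in $\{1,\ldots,p_\delta-1\}$. Take $s := \min_c \ell(c)$ and $u := P[s]$, and consider the candidate $w := P[s..p_\delta]$. By construction $w$ begins with $u$ and ends with $v$; because $s$ is the last occurrence of $u$ before $p_\delta$, the character $u$ does not reappear in the interior positions $s+1,\ldots,p_\delta-1$, and because $v$ sits only at $p_\delta$, the character $v$ is absent from the interior as well. Hence $w$ is a valid $(u,v)$ $2$-sparse substring, giving $|sparse^{(u,v)}(P)| \ge |w|$.

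The remaining step is the length bound, which is an elementary counting argument: the $\delta-1$ distinct positions $\ell(c_1),\ldots,\ell(c_{\delta-1})$ all lie in the integer interval $[s,\,p_\delta-1]$, which contains exactly $p_\delta-s$ integers, forcing $p_\delta - s \ge \delta-1$ and therefore $|w| = p_\delta - s + 1 \ge \delta$. Chaining the inequalities yields $|sparse(P)| \ge |sparse^{(u,v)}(P)| \ge \delta$. I expect the main obstacle to be stating the interior condition cleanly, namely that neither $u$ nor $v$ occurs strictly between $s$ and $p_\delta$: this is precisely where the two design choices ($v$ as the last-introduced character, $s$ as the earliest last-occurrence) carry the argument, whereas once those choices are in place the pigeonhole on distinct positions is routine.
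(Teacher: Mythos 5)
Your proof is correct and is essentially the paper's argument in mirror image: the paper fixes the left endpoint first (the character whose \emph{last} occurrence is earliest) and then the right endpoint (the latest first-occurrence after it), whereas you fix the right endpoint first (the character whose \emph{first} occurrence is latest, i.e.\ $p_\delta$) and then the left endpoint; both constructions produce a witness $2$-sparse substring containing all $\delta$ distinct characters, forcing length at least $\delta$. Your write-up is somewhat more careful about the interior condition and the pigeonhole count, but the underlying idea is the same.
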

\begin{proof}
Let $a$ be the character in $P$ whose last   occurrence has the smallest index and let its index in $P$ be denoted by $start$.
From definition of $a$, we can observe that every character in $P$ occurs at least once to the right of index $start$ in $P$.
Let $b$ be the character in $P$ whose first occurrence in $P$ to the right of $start$ has the highest index and let its position in $P$ be denoted by $end$. We    can easily   observe that in the interval $[start, end]$ all characters in $P$ are present at least once and the characters $a$ and $b$ do not appear in between. Therefore, $|sparse(P)| \geq |sparse^{(a,b)}(P)| \ge \delta$.
\end{proof}
\begin{lemma}
For any pattern string $P$, $R$ preprocesses $P$ in $O(n\delta)$ time to determine $sparse(P)$, and  $shift^{c}(P)$, for $c \in \Sigma$, where $\delta$ is the number of distinct characters in $P$.
\end{lemma}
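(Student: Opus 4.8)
The lemma states that preprocessing runs in $O(n\delta)$ time. I need to show that computing all the quantities $sparse^{(u,v)}(P)$ for $u,v \in P$, extracting $sparse(P)$, and then computing $shift^{c}(P)$ for every $c \in \Sigma$, can all be done within this bound. The key structural observation is that although there are $\delta^2$ ordered pairs $(u,v)$, the whole collection of $2$-sparse patterns can be harvested in a single left-to-right scan augmented with $O(\delta)$ bookkeeping per position.

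**The plan.** My plan is to argue that one left-to-right scan of $P$ suffices to compute every $sparse^{(u,v)}(P)$ simultaneously. First I would maintain, as the scan advances through index $k$, the array $last[\,\cdot\,]$ recording for each character $u \in \Sigma$ the most recent index at which $u$ appeared. When the scan reads $P[k] = v$, the candidate $2$-sparse pattern ending at $k$ and beginning with some character $u$ is exactly the substring $P[last[u]\,..\,k]$, provided neither $u$ nor $v$ occurs strictly between $last[u]$ and $k$. To certify the ``no $u$ or $v$ inside'' condition cheaply, I would for each fixed $v$ keep track, among the $\delta$ possible start characters $u$, of which candidate windows are still ``clean''; updating this status costs $O(\delta)$ per scanned position because at each step at most the characters $u = v$ and the newly read character can invalidate previously open windows. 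Summing $O(\delta)$ work over the $n$ positions gives the $O(n\delta)$ preprocessing bound for step (1)(a). Extracting $sparse(P)$ in step (1)(b) is then a single $\max$ over the at most $\delta^2 \le n\delta$ recorded patterns, which is absorbed into the same bound, and step (2) reads off $startc(P)$, $endc(P)$, $startpos(P)$, $endpos(P)$ in $O(1)$.

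**Computing the shift table.** For step (3) I would compute $shift^{c}(P)$ for all $c \in \Sigma$ by a single scan of $sparse(P)$ together with the $last[\,\cdot\,]$ information already gathered. For each $c$ I must decide whether $c$ occurs in $sparse(P)$ (and if so, record the distance from its rightmost occurrence to $endc(P)$), whether $c$ occurs in $P$ but not in $sparse(P)$ (set to $|sparse(P)|+1$), or whether $c$ is absent from $P$ altogether (set to $n$). Initializing the whole table to $n$, then overwriting the $\delta$ characters present in $P$ with $|sparse(P)|+1$, and finally overwriting those present in $sparse(P)$ with their true shift distances, costs $O(\Delta + |sparse(P)|) = O(n)$, which is dominated by $O(n\delta)$. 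Thus every preprocessing step fits the claimed bound.

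**Main obstacle.** The delicate point is the $O(\delta)$-per-position accounting for step (1)(a): one must verify that maintaining, for all $\delta$ candidate start characters simultaneously, the invariant that no forbidden character lies inside the current window can genuinely be done incrementally without hidden $O(\delta^2)$ cost at a single step. The crux is that reading $P[k]=v$ affects only the windows started by characters equal to $v$ (which close and must be re-opened at $k$) and potentially invalidates windows whose end-character constraint is violated; I would argue that each such event touches each of the $\delta$ entries at most a constant number of times per position, so the amortized cost stays $O(\delta)$ per step and $O(n\delta)$ overall. Everything else is routine bookkeeping.
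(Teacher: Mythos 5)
Your proposal is correct and follows essentially the same route as the paper: a single left-to-right scan maintaining last-occurrence indices, with $O(\delta)$ updates to the candidate $2$-sparse patterns at each position, followed by an $O(\delta^2)$ maximization and an $O(n)$ computation of the shift table. The one point you flag as delicate resolves immediately: the ``no $u$ or $v$ inside'' condition for the candidate ending at $P[k]=v$ and starting at the last occurrence of $u$ is equivalent to the last occurrence of $u$ lying strictly to the right of the last occurrence of $v$, so each of the $\delta$ checks is $O(1)$ and no amortization is needed.
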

\begin{proof}
First, by scanning $P$ in $O(n)$ time we can compute $shift^{c}(P)$, for $c\in P$. Second, for each pair of characters $a, b 
\in P$, we initialize $sparse^{(a,b)}(P)$ to $0$. Then, as we scan $P$, when we  encounter  a character $c \in P$, (i)    we update the index of its last occurrence, and (ii) based on the index of $c$ and the position of the       last occurrence of characters $x$ in $P$ that have occurred earlier, we update $sparse^{(c,x)}$. This     requires $O(n)$ time for the scan and 
and for each character in $P$ $O(\delta)$ time for at most $\delta$ updates to the length of sparse patterns. 
Therefore, the total time during scan is $n\delta$. Finally, since   there are     $\delta^2$ ordered pairs, we
can        trivially find the maximum length for all pairs of characters in $P$ and        from them   choose the longest in $O(\delta^2)$ time. Hence, the total preprocessing time is $O(n\delta + \delta^2) = O(n \delta)$.
\end{proof}
\section{Expected Length of $sparse(P)$ for a Random String $P$}
In this section, we show that   for a pattern string $P$ whose characters are chosen uniformly at random from an alphabet of
size    $\Delta$,   $E[|sparse(P)|]$   is 
$\Omega(\Delta log (\frac{2\Delta}{2\Delta-\delta}))$. We will first present the key idea 
behind our proofs, then we introduce some definitions that are necessary for stating and establishing a       lower bound on $E[|sparse(P)|]$. \newline \newline 
{\bf Key Idea}: 
Let $\cal{P}$ be the    set of strings of length $n$  whose characters   are drawn from  the alphabet $\Sigma=\{c_1, c_2,..., 
c_{\Delta}\}$ and $\cal{P}(\delta) \subseteq \cal{P}$ be the set of strings with at least  $\delta$  distinct characters. For  
a pattern string $P$ that is chosen  uniformly at random from $\cal{P}$, we show                              $E[|sparse(P)|] 
= \frac{1}{|\cal{P}|}$   $\sum_{P \in \cal{P}}^{} |sparse(P)|$ is      $\Omega(\Delta log(\frac{2\Delta}{2\Delta-\delta+1}))$ 
by first showing $\frac{|\cal{P}(\delta)|}{|\cal{P}|}$ =   $(1-o(1))$ and then lower bounding   $\frac{1}{|\cal{P}(\delta)|}$ $\sum_{P \in \cal{P}(\delta)}^{} |sparse(P)|$ as follows:  
\begin{itemize} 
\item [(1)] We first define an onto function $F: \cal{P}(\delta) \rightarrow \cal{Q}(\delta)$ that maps a  given string $P = 
(a_1, a_2, ..., a_n ) \in {\cal P}$ to a string $Q = F(P)=(b_1, b_2, ..., b_n)$                          such that (i)  there 
are at least $\frac{\delta}{2}-1$ distinct characters between the first and second occurrence of $b_1$ in $Q$, (ii) ${\cal Q(\delta)} 
\subseteq   {\cal P(\delta)}$, and (iii) $|{\cal Q(\delta)}| \geq \frac{1}{2}|{\cal P(\delta)}|$. 
\item [(2)]    We then use properties of the mapped string $Q=F(P)$ to lower  bound             $\frac{1}{|\cal{P}(\delta)|}$  
$\sum_{P  \in \cal{P}(\delta)}^{} |sparse(P)|$ as follows:

\[
\frac{1}{|{\cal P(\delta)}|}  \sum_{P \in {\cal P(\delta)}}^{} |sparse(P)| \geq \frac{1}{2|{\cal Q(\delta)}|} 
\sum_{Q \in {\cal Q(\delta)}}^{} |sparse(Q)|
 \geq 
\frac{1}{2|{\cal Q(\delta)}|} \sum_{Q \in {\cal Q(\delta)}}^{} firstpos^{\frac{\delta}{2}+1}(Q) 
\]
\[
\geq 
\frac{1}{2}(1-o(1))E[W^{\frac{\delta}{2}+1}] = \Omega( \Delta log(\frac{2\Delta}{2\Delta-\delta+1}))
\]
\begin{tabbing} 
where, 
\= $firstpos^{\frac{\delta}{2}+1}(Q)$ \= - \= denotes the position of    the first occurrence of the $(\frac{\delta}{2}+1)^{th}$ distinct \\ \> \>                                \> character in $Q$; \\
\> $W^{\frac{\delta}{2}+1}$ \> - \> the waiting time for the selection of the $(\frac{\delta}{2}+1)^{th}$ distinct    coupon \\
\>                    \>   \> in    a   sequence of independent trials where during each trial a    coupon  \\
\>                    \>   \>  is selected uniformly at random from among $\Delta$ different coupon types.
\end{tabbing} 
\end{itemize} 
\begin{definitions} 
Let $P = (a_1, a_2, ..., a_n)$ be     an arbitrary string in $\cal{P}(\delta)$.  Let    $FIRST(P)$=($a_{\pi(1)}$,$a_{\pi(2)}$, $...,a_{\pi(\frac{\delta}{2})}$) denote the subsequence of $P$ of length $\frac{\delta}{2}$ consisting    of the first occurrence of the first $\frac{\delta}{2}$   distinct characters in $P$ and $NEXT(P) = (a_{\pi(\frac{\delta}{2}+1)}, ..., a_{\pi(\delta)})$ denote the  subsequence of 
the first occurrence of   the next $\frac{\delta}{2}$ distinct characters in $P$.  
\end{definitions}
\begin{definitions} 
Let $P=(a_1, a_2, ..., a_n)$ be an arbitrary string in $\cal{P}(\delta)$, $r= rank^{FIRST(P)}(a_1)$ denote the lexical rank of
$a_1$ among the characters in $FIRST(P)$, and $b_1=a_{\pi(\frac{\delta}{2}+r)}$ denote the first occurrence of the $(\frac{\delta}{2}+r)^{th}$ distinct       character in $P$.  We now define the function $F : {\cal P(\delta)} \rightarrow {\cal Q(\delta)}$  as  follows:
\[ Q = F(P) = \left\{ \begin{array}{ll}
					P & \mbox{if $a_1$ occurs exactly once in the substring $(a_1, a_2, ..., a_{\pi(\delta/2)})$} \\
					(b_1, a_2, ..., a_n) & \mbox{otherwise}
				  \end{array}
		  \right. \]
\end{definitions} 
\begin{theorem}
Let $\cal{P}$ be the    set of strings of length $n$  whose characters   are drawn from  the alphabet $\Sigma=\{c_1,  c_2,..., 
c_{\Delta}\}$ and $P$ be a string in ${\cal P}$ whose characters are chosen uniformly at random from  $\Sigma$.          Then, $E[|Sparse(P)|]=
min(\frac{\Delta}{2}, \Delta log(\frac{2\Delta}  {2\Delta -\delta+1}))$.
\end{theorem}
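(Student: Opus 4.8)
The plan is to prove both inequalities with one tool, the coupon-collector waiting-time identity $E[W^{k}]=\Delta(H_{\Delta}-H_{\Delta-k})\sim\Delta\log\frac{\Delta}{\Delta-k}$, which at $k\approx\delta/2$ is exactly the target $\Delta\log\frac{2\Delta}{2\Delta-\delta+1}$. The bridge between this quantity and $|sparse(P)|$ is a first-occurrence characterization that I would set up in both directions. In the lower direction, if the leading character $b_1$ of a string $Q$ does not recur until at least $\frac{\delta}{2}-1$ further distinct characters have appeared, then the prefix of $Q$ ending at the first occurrence of the $(\frac{\delta}{2}+1)$-th distinct character is a legal $2$-sparse pattern, since neither of its boundary characters occurs strictly inside; hence $|sparse(Q)|\ge firstpos^{\frac{\delta}{2}+1}(Q)$. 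In the upper direction, the interior of any $2$-sparse pattern avoids both boundary characters, so scanning it is a coupon-collector run barred from two fixed coupon types, and one would like to dominate its length by the same waiting time at $k\approx\delta/2$.

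For the lower bound I would run the construction already outlined: define $F:{\cal P}(\delta)\rightarrow{\cal Q}(\delta)$ that fixes $P$ when $a_1$ already enjoys the required separation and otherwise relabels the leading symbol to $b_1=a_{\pi(\delta/2+r)}$, forcing $F(P)$ to have at least $\frac{\delta}{2}-1$ distinct characters before its first symbol recurs. I would verify that $F$ is onto and at most $2$-to-$1$, so $|{\cal Q}(\delta)|\ge\frac{1}{2}|{\cal P}(\delta)|$, and that $|{\cal P}(\delta)|/|{\cal P}|=1-o(1)$ by a coupon-collector tail bound. Averaging $|sparse(Q)|\ge firstpos^{\frac{\delta}{2}+1}(Q)$ over ${\cal Q}(\delta)$ then gives $E[|sparse(P)|]\ge\frac{1}{2}(1-o(1))E[W^{\frac{\delta}{2}+1}]$. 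Since $E[W^{\frac{\delta}{2}+1}]$ lies between $\frac{\delta}{2}$ and $\Delta\log 2$ while $\delta\le\Delta$, this already matches the claimed $\min(\frac{\Delta}{2},\Delta\log\frac{2\Delta}{2\Delta-\delta+1})$ up to the universal factor $\frac{1}{2}(1-o(1))$, whose removal I defer to the discussion of the upper bound.

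For the matching upper bound I would fix an ending position $j$ with $P[j]=v$ and argue that the longest $2$-sparse pattern terminating there has expected length governed by the same waiting time: its start must lie to the right of the previous $v$, and each interior symbol is a fresh uniform draw that must avoid $v$ and the eventual start character, so the count of interior positions before a forbidden repeat is stochastically dominated by a coupon-collector run collecting $\approx\delta/2$ types. Summing the dominant $u=v$ patterns and bounding the $u\ne v$ patterns by the same estimate would yield $E[|sparse(P)|]\le\Delta\log\frac{2\Delta}{2\Delta-\delta+1}$; the competing cap $\frac{\Delta}{2}$ is the value the construction can certify once $\delta$ approaches $\Delta$, which is why the theorem states the minimum of the two.

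The hard part is precisely this upper bound, for two reasons. First, to get an equality rather than a two-sided order bound one must remove the $\frac{1}{2}$ from the at-most-$2$-to-$1$ map and the $(1-o(1))$ from restricting to ${\cal P}(\delta)$; I would attempt a measure-preserving relabeling, or a direct evaluation of $E[firstpos^{\frac{\delta}{2}+1}(P)]$ over all of ${\cal P}$ together with a concentration estimate showing the longest pattern exceeds this window only by lower-order terms. Second, and more seriously, the longest $2$-sparse pattern is a maximum over many nearly independent geometric gaps, whose expected maximum drifts upward with $n$; so an $n$-free upper bound, hence the stated equality, is defensible only when $n$ is tied to $\Delta$ (equivalently when $\delta$ is the true distinct-character count and $\delta=\Theta(\Delta)$). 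Pinning down this regime, and treating the $u\ne v$ patterns on equal footing with the $u=v$ patterns, is where the main effort of the proof will go.
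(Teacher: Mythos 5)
Your lower-bound argument is, step for step, the paper's own proof. The paper's ``Key Idea'' and Lemmas 8--10 introduce exactly your map $F$ (the identity when $a_1$ does not recur before the first occurrence of the $\frac{\delta}{2}$-th distinct character, a relabeling of the first symbol otherwise), prove it is onto and at most $2$-to-$1$ so that $|{\cal Q}(\delta)| \geq \frac{1}{2}|{\cal P}(\delta)|$, show $|{\cal P}(\delta)|/|{\cal P}| = 1-o(1)$, and chain $|sparse(Q)| \geq firstpos^{\frac{\delta}{2}+1}(Q)$ to the coupon-collector waiting time $E[W^{(\frac{\delta}{2}+1)}] = \Delta\log(\frac{2\Delta}{2\Delta-\delta+1})$. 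The only cosmetic difference is the handling of the $\min$: the paper invokes Lemma 5 ($|sparse(P)| \geq \delta$) to dispose of the case $\delta \geq \frac{\Delta}{2}$ outright and runs the coupon-collector argument only when $\delta < \frac{\Delta}{2}$, whereas you absorb the $\min$ into crude two-sided bounds on $E[W^{(\frac{\delta}{2}+1)}]$; either suffices for the lower-bound direction.

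The substantive divergence is your ``hard part,'' and there the comparison cuts against the paper rather than against you: the paper never attempts a matching upper bound. Despite the equality sign in the statement, its proof concludes from Lemma 10 only that the average of $|sparse(P)|$ is $\Omega(\Delta\log(\frac{2\Delta}{2\Delta-\delta+1}))$, and the abstract and conclusion likewise claim only this $\Omega(\cdot)$. Moreover, your reason for doubting the upper bound is sound: for fixed $\Delta$ and $n \to \infty$, already the $u=v$ sparse patterns are maxima of roughly $n/\Delta$ near-independent geometric gaps with mean $\Delta$, so $E[|sparse(P)|]$ grows like $\Delta\log(n/\Delta)$, and no quantity independent of $n$ can bound it from above. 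So you have not missed an idea that the paper possesses; rather, your attempt correctly exposes that the theorem read literally as an equality is not established by the paper (and cannot hold without tying $n$ to $\Delta$) --- only the lower-bound direction, which you and the paper prove in the same way, is actually supported.
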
 
\begin{proof} 
From Lemma $5$, we know that $|Sparse(P)| \geq \delta$. So to establish this theorem, we only need to consider   the situation 
when $\delta < \frac{\Delta}{2}$. From definition, we know $E[|Sparse(P)|] = \frac{1}{|\cal{P}|}$ $\sum_{P \in \cal{P}}^{} |sparse(P)|
$  $\geq \frac{1}{|\cal{P}|}$ $\sum_{P \in {\cal P(\delta)}}^{} |sparse(P)|$. From Lemma $8$, we know $\frac{|\cal{P}(\delta)|
}{|\cal{P}|}$ = $(1-o(1))$. Therefore, $\frac{1}{|\cal{P}|}$   $\sum_{P \in {\cal P(\delta)}}^{} |sparse(P)|$   $\geq (1-o(1)) 
\frac{1}{|{\cal P(\delta)}|}$   $\sum_{P \in {\cal P(\delta)}}^{} |sparse(P)|$. From Lemma $10$, we get       $\frac{1}{|{\cal P(\delta)}|}$   $\sum_{P \in {\cal P(\delta)}}^{} |sparse(P)|   = \Omega(\Delta log(\frac{2\Delta}{2\Delta-\delta+1}))$. Hence
we are done.
\end{proof} 
\begin{lemma}
Let $\cal{P}$ be the    set of strings of length $n$  whose characters   are drawn from  the alphabet $\Sigma=\{c_1, c_2,..., 
c_{\Delta}\}$ and $\cal{P}(\delta) \subseteq \cal{P}$ be the set of strings with at least  $\delta$  distinct characters.  If 
$\delta < \frac{\Delta}{2}$, then $\frac{|\cal{P}(\delta)|}{|\cal{P}|}$ =   $(1-o(1))$.
\end{lemma}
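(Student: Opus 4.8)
The plan is to read $\frac{|{\cal P}(\delta)|}{|{\cal P}|}$ as a probability. Since $|{\cal P}|=\Delta^n$ and every string is equally likely when $P$ is drawn uniformly at random, this ratio is exactly $\Pr[X\ge\delta]$, where $X$ denotes the number of distinct characters occurring in $P$; it therefore suffices to show $\Pr[X<\delta]=o(1)$. I would work with the complementary quantity: let $Y=\Delta-X$ count the alphabet symbols that are \emph{absent} from $P$, so that the event $X<\delta$ coincides with $Y>\Delta-\delta$.

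First I would compute $E[Y]$ by linearity of expectation over the $\Delta$ indicators $\mathbf{1}[c_i \mbox{ absent from } P]$. Each symbol is absent with probability $(1-1/\Delta)^n$, so $E[Y]=\Delta(1-1/\Delta)^n\le\Delta e^{-n/\Delta}$. Markov's inequality then gives $\Pr[Y>\Delta-\delta]\le E[Y]/(\Delta-\delta)$, and this is exactly where the hypothesis $\delta<\Delta/2$ enters: it forces $\Delta-\delta>\Delta/2$, whence $\Pr[X<\delta]\le 2E[Y]/\Delta\le 2e^{-n/\Delta}$, which tends to $0$ as soon as $n/\Delta\to\infty$.

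As an alternative that yields a cleaner exponential tail, I would instead bound $|{\cal P}\setminus{\cal P}(\delta)|$ directly. A string using at most $\delta-1$ distinct symbols is obtained by first choosing the $\delta-1$ permitted symbols and then filling the $n$ positions, so $|{\cal P}\setminus{\cal P}(\delta)|\le{\Delta \choose \delta-1}(\delta-1)^n$, an overcount and hence a valid upper bound. Dividing by $\Delta^n$ and using $\delta-1<\Delta/2$ to get $((\delta-1)/\Delta)^n<2^{-n}$, together with ${\Delta \choose \delta-1}\le 2^{\Delta}$, would give $\frac{|{\cal P}\setminus{\cal P}(\delta)|}{|{\cal P}|}\le 2^{\Delta-n}=o(1)$ whenever $n-\Delta\to\infty$.

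The step I expect to be the main obstacle is not any single estimate but pinning down the asymptotic regime in which the $o(1)$ is claimed, since both bounds decay only once $n$ is suitably large relative to $\Delta$ (respectively $n=\omega(\Delta)$ and $n-\Delta\to\infty$). If the intended regime has $\Delta\to\infty$ with $\delta=\Theta(\Delta)$ and $n$ correspondingly large, the union/Markov bound is already enough; if one wants decay uniform in $\Delta$ at more modest lengths, I would switch to Chebyshev's inequality, observing that the absence indicators are \emph{negatively} correlated --- two symbols are jointly absent with probability $(1-2/\Delta)^n<(1-1/\Delta)^{2n}$ --- so that $\mathrm{Var}[Y]\le E[Y]$ and the second-moment bound can only sharpen the conclusion. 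Making explicit which regime is intended, and therefore which concentration tool is the right one, is the one genuinely delicate point.
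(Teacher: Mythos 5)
Your proposal is correct. Your second argument --- bounding $|{\cal P}\setminus{\cal P}(\delta)|$ by ${\Delta \choose \delta-1}(\delta-1)^n$ and dividing by $\Delta^n$ --- is essentially the paper's own proof: the paper bounds the fraction of strings with fewer than $\delta$ distinct symbols by $(\Delta C_{\delta})(\delta/\Delta)^n$ and asserts this is $<1/n$ once $n$ is large enough (stating the condition loosely as $n>\log n+\delta$). You execute the same idea more carefully, invoking $\delta-1<\Delta/2$ to make the base of the exponential less than $\frac{1}{2}$ and obtain the clean bound $2^{\Delta-n}$, and you make explicit the point the paper glosses over, namely that the $o(1)$ only holds in a regime where $n$ grows suitably faster than $\Delta$ (or at least than $\delta\log\Delta$); the paper's ``solving for $n$'' step leaves this implicit. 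Your first argument, via Markov's inequality applied to the number of absent alphabet symbols, is a genuinely different route: it uses the hypothesis $\delta<\Delta/2$ in a different place (to force $\Delta-\delta>\Delta/2$ in the Markov denominator), it avoids the binomial coefficient entirely, and the negative-correlation observation shows how to upgrade it to a second-moment bound if a sharper tail is wanted. Either route establishes the lemma; the one caveat, which you correctly flag rather than hide, is that the statement's $o(1)$ is not meaningful without specifying the limiting regime, and that caveat applies equally to the paper's own proof.
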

\begin{proof}
We  establish this lemma by showing that the number of strings of length $n$ in $\cal{P}$ with less than    $\delta$ distinct
symbols is a small fraction of $\cal{P}$. That is, we   show that $(\Delta C_{\delta}) (\frac{\delta}{\Delta})^{n} < 1/n$. On 
expanding the left hand side, taking      logarithms on both sides and then solving for $n$, we see that the above inequality 
holds for $n > log n + \delta$.  
\end{proof} 
\begin{lemma} 
Let $F: \cal{P}(\delta)      \rightarrow \cal{Q}(\delta)$ be the function defined in Definitions $4.2$. Let 
$P (a_1, a_2, ..., a_n )$ be any string in ${\cal P(\delta)}$ that is mapped to string $Q = F(P)=(b_1, b_2, ..., b_n)$. 
We will show that (i)  there 
are at least $\frac{\delta}{2}-1$ distinct characters between the first and second occurrence of $b_1$ in $Q$, (ii) ${\cal Q(\delta)} 
\subseteq   {\cal P(\delta)}$, and (iii) $|{\cal Q(\delta)}| \geq \frac{1}{2}|{\cal P(\delta)}|$. 
\end{lemma}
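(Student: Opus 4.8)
The plan is to prove all three claims by a case analysis following the two branches of $F$ from Definitions $4.2$, exploiting that $F$ alters only the first character of its argument, so positions $2,\dots,n$ of $P$ and $Q=F(P)$ always coincide. I would fix $P=(a_1,\dots,a_n)\in{\cal P}(\delta)$, write $S=(a_2,\dots,a_n)$ for the common suffix, and let $d_1,d_2,\dots,d_k$ be the distinct characters of $S$ in order of first appearance. In the substitution branch $a_1$ reappears inside the early prefix, so $a_1$ occurs in $S$; and the replacement character $b_1=a_{\pi(\delta/2+r)}$, being the $(\delta/2+r)$-th distinct character of $P$, first appears at a position $\ge 2$ and hence also lies in $S$. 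Consequently both $P$ and $Q$ have exactly the distinct-character set of $S$, which already settles most of (ii).

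For (i) I would argue separately in the two branches. In the identity branch $Q=P$ (so $b_1=a_1$), the defining condition is that $a_1$ occurs exactly once in $(a_1,\dots,a_{\pi(\delta/2)})$, so its second occurrence lies strictly to the right of the first occurrence of the $(\delta/2)$-th distinct character; the $2$nd through $(\delta/2)$-th distinct characters then all appear strictly between the two occurrences of $b_1=a_1$, giving at least $\delta/2-1$ of them. In the substitution branch $b_1$ is the $(\delta/2+r)$-th distinct character of $P$ and does not occur before position $\pi(\delta/2+r)$ in $P$, so its second occurrence in $Q$ is exactly there; counting the distinct characters preceding it yields at least $\delta/2-1$ (in fact at least $\delta/2$, since $r\ge 1$) distinct characters between the two occurrences of $b_1$. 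For (ii), the identity branch gives $Q=P\in{\cal P}(\delta)$ immediately, while in the substitution branch the distinct-set argument above shows $Q$ and $P$ share the same distinct characters, so $Q$ again has at least $\delta$.

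The real content is (iii), the bound $|{\cal Q}(\delta)|\ge\frac{1}{2}|{\cal P}(\delta)|$, which I would obtain by showing $F$ is at most two-to-one. Since every preimage of a fixed $Q$ agrees with $Q$ on positions $2,\dots,n$, a preimage is determined by its first character alone, and it suffices to bound the number of admissible first characters. There is at most one preimage from the identity branch (namely $Q$ itself, and only when $Q$ meets that branch's condition). For the substitution branch I would reconstruct the preimage from $Q$: first argue that an admissible $a_1$ must recur early enough in $S$ that $FIRST(P)$ equals the fixed set $\{d_1,\dots,d_{\delta/2}\}$ independently of $a_1$; then recover $r$ directly from $Q$, reading it off from the position of $b_1$ in the distinct-appearance order of $Q$ (equivalently from the count of distinct characters between the two occurrences of $b_1$ established in (i)); and finally observe that, because $FIRST(P)$ is then a fixed set, the requirement $r=rank^{FIRST(P)}(a_1)$ forces $a_1$ to be the unique character of lexical rank $r$ in $\{d_1,\dots,d_{\delta/2}\}$. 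Combining the two branches gives at most two preimages, hence the factor $\frac{1}{2}$.

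The step I expect to be the main obstacle is exactly this uniqueness argument in the substitution branch: one must verify that the branch's condition really forces $a_1$ to recur early enough in $S$ that $FIRST(P)$ becomes the $a_1$-independent set $\{d_1,\dots,d_{\delta/2}\}$, and that $r$ is genuinely recoverable from $Q$ alone. This is precisely why the construction defines $b_1$ through the lexical rank of $a_1$ rather than, say, a fixed offset of $\delta/2$: the lexical rank is an injective encoding of $a_1$ inside a set that $Q$ determines, so it can be inverted. I would therefore isolate the statement ``$FIRST(P)=\{d_1,\dots,d_{\delta/2}\}$ for every substitution-branch preimage'' as a small internal claim, after which the rest of (iii) is bookkeeping.
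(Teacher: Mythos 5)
Your proposal is correct and follows essentially the same route as the paper: a case analysis on the two branches of $F$ for property (i), preservation of the distinct-character set for (ii), and an at-most-two-to-one counting argument for (iii). The only difference is one of rigor, in your favor: the paper merely asserts that each $Q$ has at most two preimages, whereas you actually supply the reconstruction argument (recovering $r$ from the position of $b_1$ in the first-appearance order of the common suffix, and then $a_1$ as the rank-$r$ character of the $a_1$-independent set $FIRST(P)=\{d_1,\dots,d_{\delta/2}\}$), which is exactly the missing justification and does go through.
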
 
\begin{proof}
From definition of $F$, we know that if the first character in $P$ does not repeat until the first occurrence of the 
$\frac{\delta}{2}$th          distinct character then $Q=P$ and Property (i) is automatically satisfied. Otherwise, we obtain $Q$ from $P$ by 
replacing the first character in $P$ by the first occurrence of the 
$\frac{\delta}{2}+r^{th}$ distinct character in $Q$, where $r$  is 
the $rank^{FIRST(P)}(a_1) \geq 1$. Hence, Property (i) is  satisfied in this situation also.
Now, we will show that   ${\cal Q(\delta)} \subseteq   {\cal P(\delta)}$. Notice from the defintion of $F$, for any $P$, $Q = F(P)$ has the same number           of distinct symbols as $P$ and the length of $Q$ is $n$, therefore $Q$ is also an element 
of ${\cal P}(\delta)$. Therefore,     ${\cal Q(\delta)}\subseteq {\cal P(\delta)}$. Also, we        can observe that from the definition  of $F$ that for each string $Q \in {\cal Q(\delta)}$ there are at most   two strings   in ${\cal P(\delta)}$ that 
map to it. In addition, for each string $Q \in {\cal Q(\delta)}$ $Q$ itself is one of the pre-images, so  $|{\cal Q(\delta)}| \geq \frac{1}{2}|{\cal P(\delta)}|$. 
\end{proof}
\begin{lemma}
 Let $F: \cal{P}(\delta) \rightarrow \cal{Q}(\delta)$ be the function defined in Definitions $1.2$. We will now establish the 
 following inequalities.  
\begin{eqnarray} 
\frac{1}{|{\cal P(\delta)}|}  \sum_{P \in {\cal P(\delta)}}^{} |sparse(P)| \geq \frac{1}{2|{\cal Q(\delta)}|} 
\sum_{Q \in {\cal Q(\delta)}}^{} |sparse(Q)| 
\geq \frac{1}{2|{\cal Q(\delta)}|} \sum_{Q \in {\cal Q(\delta)}}^{} firstpos^{\frac{\delta}{2}+1}(Q) \\
\frac{1}{|{\cal Q(\delta)}|} \sum_{Q \in {\cal Q(\delta)}}^{} firstpos^{\frac{\delta}{2}+1}(Q) = 
(1-o(1))E[W^{(\frac{\delta}{2}+1)}] \\
E[W^{(\frac{\delta}{2}+1)}] \geq \Delta log(\frac{2\Delta}{2\Delta-\delta+1})
\end{eqnarray}
\end{lemma}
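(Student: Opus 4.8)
The plan is to establish the three displayed relations separately, moving from left to right.

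For the chain in line~(1) I would argue each inequality termwise. The first inequality needs only Lemma~9: since ${\cal Q}(\delta)\subseteq{\cal P}(\delta)$ and every summand $|sparse(\cdot)|$ is nonnegative, dropping to the subset ${\cal Q}(\delta)$ only decreases the sum, while the cardinality bound $|{\cal Q}(\delta)|\geq\frac12|{\cal P}(\delta)|$ gives $\frac{1}{|{\cal P}(\delta)|}\geq\frac{1}{2|{\cal Q}(\delta)|}$; multiplying the two facts yields the bound. For the second inequality I would prove the pointwise estimate $|sparse(Q)|\geq firstpos^{\frac{\delta}{2}+1}(Q)$ for each $Q\in{\cal Q}(\delta)$ and then sum. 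The natural witness is a $2$-sparse pattern anchored at the first character $b_1=Q[1]$: letting $c$ be the $(\frac{\delta}{2}+1)$-th distinct character of $Q$, the prefix of $Q$ ending at the first occurrence of $c$ begins with $b_1$, ends with $c$, and, by Property~(i) of Lemma~9, contains neither $b_1$ (which does not recur until many distinct characters have been read) nor $c$ (whose first occurrence is its right endpoint) in its interior, so it is a legal $sparse^{(b_1,c)}$ pattern of length $firstpos^{\frac{\delta}{2}+1}(Q)$. The only care here is a single boundary coupon: Property~(i) literally guarantees that $b_1$ survives past the first $\frac{\delta}{2}$ distinct characters, so the recurrence of $b_1$ is kept beyond $firstpos^{\frac{\delta}{2}+1}(Q)$ up to one coupon, which is harmless for the final $\Omega$ estimate.

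For line~(2) I would first identify $firstpos^{\frac{\delta}{2}+1}$ with a coupon-collector waiting time: under the uniform measure on ${\cal P}$ (equivalently, i.i.d.\ uniform characters) the first-occurrence position of the $(\frac{\delta}{2}+1)$-th distinct symbol is exactly the time $W^{(\frac{\delta}{2}+1)}$ to collect $\frac{\delta}{2}+1$ of the $\Delta$ coupon types, so that the average of $firstpos^{\frac{\delta}{2}+1}$ over all of ${\cal P}$ equals $E[W^{(\frac{\delta}{2}+1)}]$. It then remains to show that replacing the averaging set by ${\cal Q}(\delta)$ changes this quantity only by a factor $(1-o(1))$. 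Here I would combine Lemma~8, which passes from ${\cal P}$ to ${\cal P}(\delta)$ at a $(1-o(1))$ cost, with the locality of the map $F$, which rewrites only the first symbol; since $E[W^{(\frac{\delta}{2}+1)}]$ grows without bound, a bounded perturbation of a single coordinate should move the average by a lower-order amount. I expect this to be the main obstacle: one must check that the two-to-one folding performed by $F$ together with the structural constraint Property~(i) places on its images does not bias the distribution of $firstpos^{\frac{\delta}{2}+1}$ by more than $(1-o(1))$, i.e.\ that no second factor of two is lost beyond the one already accounted for in line~(1).

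Finally, line~(3) is a routine coupon-collector computation. Writing $W^{(\frac{\delta}{2}+1)}$ as a sum of independent geometric inter-arrival times, the $i$-th of which has mean $\frac{\Delta}{\Delta-i}$, gives
\[ E[W^{(\frac{\delta}{2}+1)}]=\Delta\sum_{i=0}^{\frac{\delta}{2}}\frac{1}{\Delta-i}. \]
Since $1/(\Delta-x)$ is increasing, each term dominates the integral of $1/(\Delta-x)$ over the preceding unit interval, and comparing the sum with $\int 1/(\Delta-x)\,dx$ lower bounds it by $\Delta\log\frac{2\Delta}{2\Delta-\delta+1}$. Chaining lines~(1)--(3) then yields the asserted $\Omega(\Delta\log\frac{2\Delta}{2\Delta-\delta+1})$.
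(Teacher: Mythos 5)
Your proposal follows essentially the same route as the paper's proof: reduce the average over ${\cal P}(\delta)$ to one over ${\cal Q}(\delta)$ via Lemma 9, lower-bound $|sparse(Q)|$ by the first-occurrence position of the $(\frac{\delta}{2}+1)$-th distinct character using the $2$-sparse pattern anchored at $b_1$, identify that position with a coupon-collector waiting time, and evaluate $E[W^{(\frac{\delta}{2}+1)}]$ as a sum of geometric means. You in fact supply more detail than the paper at several points (the explicit pointwise witness for the second inequality of line (1), the one-coupon slack, and the integral comparison in line (3)), and the equidistribution step in line (2) that you flag as the main obstacle is treated no more rigorously in the paper, which simply asserts it with an unjustified ``$\approx$''.
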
 
\begin{proof}
The first inequality follows from definition of $F$ and the properties of $F$ established in Lemma $9$. The second 
inequality is established as follows. First, we compute $\sum_{Q \in {\cal Q(\delta)}}^{} firstpos^{\frac{\delta}{2}+1}(
Q)$ by computing the sum of the waiting times for the selection of the $(\frac{\delta}{2}+1)^{th}$ distinct coupon over all 
possible sequences of trials, where during a trial a coupon is selected uniformly at  random from   among $\Delta$ 
different coupon types. Then, from this total we   exclude   the waiting time of those sequences that require more 
than $n$ trials to select $\delta$ distinct coupons. More formally, 
\begin{eqnarray} 
\frac{1}{|{\cal Q(\delta)}|} \sum_{Q \in {\cal Q(\delta)}}^{} firstpos^{\frac{\delta}{2}+1}(Q)   
\approx{E[W^{(\frac{\delta}{2}+1)}] - \sum_{i \in [n+1.. \infty]}^{} i* Pr(W^{(\delta)} = i)} 
\end{eqnarray} 
The     expected
waiting   time $E[W^{(\frac{\delta}{2}+1)}]$ for the selection of $(\frac{\delta}{2}+1)^{th}$ distinct coupon can be determined by 
letting $X_{i}$   denote    the number of trials following the selection of the $i^{th}$ distinct coupon until the 
$(i+1)^{th}$ distinct coupon type is selected. Then, we   can see that $W^{(\frac{\delta}{2}+1)} = \sum_{i \in [1..\frac{\delta}{2}+1]}^{} 
X_{i}$ is the waiting time until the $(\frac{\delta}{2}+1)$th coupon type is   selected. It is easy to observe that the $X_i$'s 
are geometrically distributed random variables with parameter $p_i = \frac{\Delta-i}{\Delta}$. Therefore, 
\begin{eqnarray}
E[W^{(\frac{\delta}{2}+1)}] 
= E[\sum_{i \in [1..\frac{\delta}{2}+1]}^{} X_{i}] = \Delta log (2\Delta/(2\Delta-\delta+1))
\end{eqnarray}
For $i \in [n+1..\infty]$, we bound $i* Pr(W^{(\delta)} = i)$ as follows
\begin{eqnarray} 
i* Pr (W^{\delta}) = (\Delta C_{\delta})[\delta^{i} - \delta^{i-1}] \frac{i}{\Delta^{i}} < 
\frac{\Delta^{\delta}}{\delta!} \delta^{i} \frac{i}{\Delta^{i}} = o(1) \ \ \  \forall i > \delta + log(i)
\end{eqnarray}
Now, from Equations (4), (5) and (6) we see that the Inequality (2) follows.
\end{proof} 
\section{Conclusions and Future Work}
Given a pattern string $P$ of length $n$ consisting of $\delta$ distinct characters     and a query string $T$ of length $m$, where the characters of $P$ and $T$  are drawn from an alphabet $\Sigma$ of size $\Delta$,    the {\em exact string matching} problem consists of finding all occurrences of $P$ in $T$. For this problem, we present a    randomized heuristic    that in $O(n\delta)$  time preprocesses $P$ to    identify $sparse(P)$, a  rarely occurring substring of $P$, and  then   use  it to find all occurrences of $P$ in $T$ efficiently. This    heuristic  has  an expected      search time of
 $O( \frac{m}{min(|sparse(P)|, \Delta)})$, where $|sparse(P)|$ is at least $\delta$.  We also show that for a pattern string $P$   whose characters are chosen  uniformly at random from an alphabet of size $\Delta$, $E[|sparse(P)|]$ is 
 $\Omega(\Delta log (\frac{2\Delta}{2\Delta-\delta}))$. We believe that for a large class of non-trivial pattern 
 strings, our heuristic with better analysis can yield a randomized algorithm that has sublinear run time in the worst 
 case scenario.
\section{Acknowledgements}
I would  like to thank an anonymous referee for many suggestions and for  pointing out discrepencies in the earlier draft of this    paper. In addition, I would also like to thank Professor V. Sunitha for        her help in resolving some of my type setting problems in Latex.
\section*{References}
\begin{hangref} 
\item [(1)]  MORRIS (Jr) J.H., PRATT V.R., 1970, A linear pattern-matching algorithm, Technical Report 40, University of California, Berkeley.
\item [(2)]   Harrison, Malcolm C. "Implementation of the substring test by hashing." Communications of the ACM 14.12, pp. 777-779 (1971).
\item [(3)]  	Boyer, Robert S., and J. Strother Moore. "A fast string searching algorithm." Communications of the ACM 20.10,
pp 762-772  (1977).
\item [(4)]  	Knuth, Donald E., James H. Morris, Jr, and Vaughan R. Pratt. "Fast pattern matching in strings." SIAM journal on computing 6.2 pp. 323-350 (1977).
\item [(5)]  	Horspool, R. Nigel. "Practical fast searching in strings." Software: Practice and Experience 10.6, pp. 501-506 (1980).
\item [(6)]    Galil, Zvi, and Joel Seiferas. "Time-space-optimal string matching." Journal of Computer and System Sciences 26.3, pp. 280-294, (1983). 
\item [(7)]    Apostolico, Alberto, and Raffaele Giancarlo. "The Boyer-Moore-Galil string searching strategies revisited." SIAM Journal on Computing 15.1, pp. 98-105 (1986).	
\item[(8)]    Feng, Zhu Rui, and Tadao Takaoka. "On improving the average case of the Boyer-Moore string matching algorithm." Journal of Information Processing 10.3, pp. 173-177  (1988). 	
\item[(9)]  	Karp, Richard M., and Michael O. Rabin. "Efficient randomized pattern-matching algorithms." IBM Journal of Research and Development 31.2, pp. 249-260,  (1987).
\item[(10)]  	Sunday, Daniel M. "A very fast substring search algorithm." Communications of the ACM 33.8, pp. 132-142  (1990).
\item[(11)]  	Apostolico, Alberto, and Maxime Crochemore. "Optimal canonization of all substrings of a string." Information and Computation 95.1, pp. 76-95  (1991).
\item[(12)]  	Colussi, Livio. "Correctness and efficiency of pattern matching algorithms." Information and Computation 95.2, pp. 225-251 (1991).
\item[(13)]  	Crochemore, Maxime, and Dominique Perrin. "Two-way string-matching." Journal of the ACM (JACM) 38.3, 
pp. 650-674  (1991).
\item[(14)]  	Hume, Andrew, and Daniel Sunday. "Fast string searching." Software: Practice and Experience 21.11, pp.  1221-1248  (1991).
\item[(15)]  	Smith, P. D. "Experiments with a very fast substring search algorithm." Software: Practice and Experience 21.10,  pp. 1065-1074 (1991).
\item[(16)]  	Baeza-Yates, Ricardo, and Gaston H. Gonnet. "A new approach to text searching." Communications of the ACM 35.10, 
pp. 74-82  (1992).
\item[(17)] Ricardo A. Baeza-Yates, Chapter 10: String Seraching Algorithms, Information Retrieval: Data Structures \& Algorithms
edited by William B. Frakes and Ricardo Baeza-Yates, pp. 219-240, Prentice Hall, (1992).
\item[(18)]  	Galil, Zvi, and Raffaele Giancarlo. "On the exact complexity of string matching: upper bounds." SIAM Journal on Computing 21.3, pp.  407-437  (1992).
\item[(19)]  	Lecroq, Thierry. "A variation on the Boyer-Moore algorithm." Theoretical Computer Science 92.1, pp. 119-144 (1992).
\item[(20)]  	Raita, Timo. "Tuning the boyerâ€mooreâ€horspool string searching algorithm." Software: Practice and Experience 22.10,  pp. 879-884 (1992).
\item[(21)]   Wu, Sun, and Udi Manber. "Fast text searching: allowing errors." Communications of the ACM 35.10, pp. 83-91  (1992).
\item[(22)]  	Colussi, Livio. "Fastest pattern matching in strings." Journal of Algorithms 16.2, pp. 163-189  (1994).
\item[(23)]   Crochemore, M., Czumaj, A., Gasieniec, L., Jarominek, S., Lecroq, T., Plandowski, W., \& Rytter, W.  
Speeding up two string-matching algorithms. Algorithmica, 12(4-5), pp. 247-267 (1994).
\item[(24)]   Crochemore, Maxime, and Thierry Lecroq. "Tight bounds on the complexity of the Apostolico-Giancarlo algorithm." Information Processing Letters 63.4, pp. 195-203  (1997).
\item[(25)]  Gusfield, D. "Algorithms on strings, trees, and sequences." Computer Science and Computional Biology (Cambrigde, 1999) (1997).
\item[(26)]  	Charras, Christian, Thierry Lecrog, and Joseph Daniel Pehoushek. "A very fast string matching algorithm for small alphabets and long patterns." Combinatorial Pattern Matching. Springer Berlin Heidelberg, 1998.
\item[(27)]  	Berry, Thomas, and S. Ravindran. "A Fast String Matching Algorithm and Experimental Results." Stringology. 1999.
\item[(28)]  Charras, Christian, and Thierry Lecroq. Handbook of exact string matching algorithms. London, UK:: King's College Publications, 2004.
\item[(29)]  Crochemore, Maxime, Christophe Hancart, and Thierry Lecroq. Algorithms on strings. Cambridge University Press, 2007.
\item[(30)]  Faro, Simone, and Thierry Lecroq. "The exact online string matching problem: A review of the most recent results." ACM Computing Surveys (CSUR) 45.2 (2013).
\item[(31)]  Faro, S., and T. Lecroq. "Smart: a string matching algorithm research tool". University of Catania and University of Rouen (2011).
\item[(32)] S. Divakaran, Fast Algorithms for Exact String Matching. CoRR abs/1509.09228 (2015)
\end{hangref} 
\end{document}